\newtheorem{lemma}{Lemma}
\newtheorem{remark}{Remark}
\newtheorem{assum}{Assumption}
\newtheorem{theorem}{Theorem}
\newtheorem{prop}{Proposition}
\title{\LARGE \bf  Conic-sector-based analysis and control synthesis \\ for linear parameter varying systems}
\author{Sivaranjani S, James Richard Forbes, Peter Seiler and Vijay Gupta
\thanks{Sivaranjani S and Vijay Gupta are with the Department of Electrical Engineering, University of Notre Dame, South Bend, IN. \{sseethar,vgupta2\}@nd.edu. James Richard Forbes is with the Department of Mechanical Engineering, McGill University, Montreal, QC, Canada. james.richard.forbes@mcgill.ca. Peter Seiler is with the Department of Aerospace Engineering and Mechanics, University of Minnesota, Minneapolis, MN. seile017@umn.edu. 
}
}
\begin{document}

\maketitle

\thispagestyle{empty}
\pagestyle{empty}

\begin{abstract}
We present a conic sector theorem for linear parameter varying (LPV) systems in which the traditional definition of conicity is violated for certain values of the parameter. We show that such LPV systems can be defined to be conic in an average sense if the parameter trajectories are restricted so that the system operates with such values of the parameter sufficiently rarely. We then show that such an average definition of conicity is useful in analyzing the stability of the system when it is connected in feedback with a conic system with appropriate conic properties. This can be regarded as an extension of the classical conic sector theorem. Based on this modified conic sector theorem, we design conic controllers that allow the closed-loop system to operate in nonconic parameter regions for brief periods of time. Due to this extra degree of freedom, these controllers lead to less conservative performance than traditional designs, in which the controller parameters are chosen based on the largest cone that the plant dynamics are contained in. We demonstrate the effectiveness of the proposed design in stabilizing a power grid with very high penetration of renewable energy while minimizing power transmission losses.
\end{abstract}

\section{Introduction}
The conic sector theorem, introduced by Zames \cite{zamescombined}, is a powerful input-output stability result that is applicable to both linear and nonlinear systems that are sector bounded. Traditionally, conic-sector-based analysis has received relatively limited attention in comparison to small gain and passivity-based analysis due to the difficulty involved in accurately characterizing conic bounds.  However, recent results in the efficient computation of conic bounds have made it possible to consider the conic sector theorem as an attractive tool for analysis and control design \cite{bridgeman2014conic}.

In this paper, we present a conic-sector-based stability analysis and control synthesis for linear parameter varying (LPV) systems. 
Typical control design approaches for LPV systems involve designing and scheduling a bank of controllers to stabilize the system for all parameter values in its operating trajectory \cite{leith1998gain,lawrence95,shamma90}. Analysis and synthesis results for LPV systems based on small gain theorems \cite{apkarian1995,packard94,scherer2012a} and dissipativity theory \cite{apkarian98,packard92,wu1995} are also available. Of particular interest to this paper, controllers have been designed for LPV systems \cite{joshi2002design} and specific subclasses like polytopic systems \cite{alexthesis} using the conic sector theorem. However, these controllers are usually conservative since they are designed for the worst-case sector bounds on the plant as a function of the parameter. Furthermore, these results cannot be used for LPV systems in which the system is non-conic for even one value of the parameter. In this paper, we propose a less conservative design approach based on a modification of the conic sector theorem that does not constrain the designer to consider the worst-case sector bound.

Recent results indicate that passivity in LPV systems can be preserved on the average even when certain parameter values lead to nonpassive operation, provided that the system spends a sufficiently small amount of time operating with these parameter values \cite{sivaranjanipassivity}. 
Here, we generalize these results to obtain a modified conic sector theorem for intermittently conic LPV systems, that is, LPV systems that may be nonconic for certain parameter values. We then use this theorem to design a conic controller that allows the closed-loop LPV system to be intermittently conic. This approach allows for less conservative control designs as the controller is not designed for the worst case sector bound of the plant, which, in fact, does not exist for intermittently conic plants.

The contributions of this paper are threefold. Firstly, we derive conditions under which an LPV system with intermittent conic behavior is conic in the average sense. Secondly, we derive a modified conic sector theorem for the feedback interconnection of an intermittently conic LPV system and a conic system. Finally, we design a conic controller based on this conic sector theorem that allows the closed-loop system to be intermittently conic, leading to less conservative designs than methods available in literature. We also demonstrate an application of the proposed design in stabilizing a power system with high penetration levels of renewable energy while minimizing power losses.


\textit{Notation:} $\mathbb{R}$ denotes the set of real numbers, $\mathbb{R}^{p}$ the set of $p$-dimensional real vectors, $\mathbb{R}^{+}$ the set of non-negative real numbers, $\mathbb{N}$ the set of natural numbers including zero and $\mathbb{N}_n$ the set of natural numbers $\{1,\cdots,n\}$, $n \in \mathbb{N}$. Given two sets $A$ and $B$, $A\setminus B$ denotes the set of all elements of $A$ that are not in $B$. For a vector $v$, $[v]_{i}$ denotes the $i$-th component. $I$ stands for the identity matrix with dimensions clear from the context. $A'$ denotes the transpose of a matrix $A$.

\section{Problem Formulation}\label{problem}
\textbf{Linear Parameter Varying (LPV) system:} Consider the system described by
\vspace{-0.5em}
\begin{equation}
\begin{gathered} \label{lpvsys}
\dot{x}(t) = A({\rho}(t))  x(t) + B({\rho}(t))  u(t) \\
y(t) = C(\rho(t))  x(t) + D(\rho(t))  u(t)
\end{gathered}
\vspace{-0.5em}
\end{equation}
where $x(t)\in\mathbb{R}^{n}$ is the process state, $u(t)\in\mathbb{R}^{m}$ is the control input, $y(t)\in\mathbb{R}^{m}$ is the process output, and $\rho\in \mathbb{R}^{p}$ is the parameter vector which is assumed to be a piecewise continuously differentiable function of time.  We will limit consideration to the time interval $[t_0,t_n]$ and consider $t_0=0$ and $t_n\rightarrow\infty$ to define stability for this system. 
The parameter trajectory from times $t_0$ to $t_n$ (denoted by $\rho(t)|_{t_0}^{t_n}$ or simply $\rho(t)$  when the interval $[t_0,t_n]$ is clear from context) refers to the set of parameter values that the system assumes from time $t_0$ to $t_n$. The following assumptions will hold throughout the paper.
\begin{assum}\label{input}
	We assume that the input $u(t)$ is norm bounded so that $||u(t)||_2^2 \in [\bar u_{1}^2, \bar u_{2}^2]$ where $\bar u_{1}>0$ and $\bar u_{2}<\infty$ are positive constants.
\end{assum}

LPV models are often constructed as linearizations of nonlinear dynamics about a collection of different operating points. Assumption \ref{input} implies that $u(t)=0$ for all $t$ is not allowed, that is, we disallow operation at an equilibrium for such LPV models.

\begin{assum}
	The parameter trajectories satisfy the following range and rate bounds
	\vspace{-0.5em}
	\begin{equation}\label{rhobound}
	\underline{\rho} \le [\rho(t)]_{i} \le \bar{\rho}, \quad
	\underline{\nu}  \le \left[\frac{d \rho(t)}{dt}\right]_{i} \le \bar{\nu},\quad \forall 1\leq i\leq p.
	\end{equation}
\end{assum}
The special case when $\underline{\nu}\rightarrow-\infty$ and $\bar{\nu}\rightarrow\infty$ is called the rate unbounded case. 

Let all parameter trajectories that satisfy these constraints be collected in the set $\mathcal{A}$, which is the set of \textit{admissible parameter trajectories}. 
An admissible parameter trajectory $\rho(t)|_{t_0}^{t_n} \in \mathcal{A}$ where the parameter assumes constant values $\rho(i)$ indexed by $i \in \{1,2,\ldots,n\}$, $n \in \mathbb{N}$, for time $t_{i-1}< t \leq t_i$ is defined as a discrete parameter trajectory over the time interval $[t_0,t_n]$. We denote the set of all \textit{discrete parameter trajectories} by $\mathcal{A}^d$, where $\mathcal{A}^d \subset \mathcal{A}$.

\textbf{Conicity:} The system~(\ref{lpvsys}) with a fixed value of the parameter is a linear time-invariant system and can be analyzed for conicity~\cite{bridgeman2014conic}. As the parameter value varies, we use the following definition of conicity 
for particular parameter trajectories.
The LPV system (\ref{lpvsys}) with a given parameter trajectory $\rho(t)|_{t_0}^{t_n}$ is said to be conic for that parameter trajectory if there exist constants $a$ and $b$ such that $0<a<b$ and the inequality 
\vspace{-1.2em}
\begin{equation}
\label{eq:conicity}
\mathop \int_{t_1}^{t_2} w(u(t),y(t))dt \geq 0,
\vspace{-0.7em}
\end{equation}
holds for all times~$t_1$ and~$t_2$ such that~$t_0\leq t_1 <t_2\leq t_n$ and all $u(t)$ satisfying Assumption \ref{input}, where \setlength{\arraycolsep}{-0.8pt}
\vspace{-0.7em}
\begin{equation*}
\label{eq:w}
w(u(t),y(t))~=~\left[\begin{array}{c} y(t) \\ u(t)\end{array}\right]' 
\left[ \begin{array}{lr}-\frac{1}{b}I&\frac{1}{2}\left(1+\frac{a}{b}\right)I \\\frac{1}{2}\left(1+\frac{a}{b}\right)I&-aI\end{array}\right]
\left[\begin{array}{c}y(t) \\u(t)\end{array}\right].
\end{equation*}
Further, if the system is conic for every parameter trajectory in a set $\mathcal{S} \subseteq \mathcal{A}$, then it is said to be conic for that set $\mathcal{S}$. Note that the constants $a$ and $b$ will, in general, depend on the parameter trajectory and are not unique. We assume that the conic bounds represent the tightest bounds for the set of parameter trajectories $\mathcal{S}\subseteq \mathcal{A}$ and time interval being considered, and can be characterized by solving the matrix inequality (\ref{coniclmi}) as described in Appendix \ref{determining}. We define the interval $[a,b]$ as the conic sector for the system (\ref{lpvsys}) for the set $\mathcal{S}$, the value $(b-a)/2$ as the corresponding radius $r$ of the conic sector, and the quantity $(b+a)/2$ as the center $c$ of the conic sector. We also drop the dependence of $w(u(t),y(t))$ on time and use $w(u,y)$ for brevity. Note that \eqref{eq:conicity} is a finite horizon integral quadratic constraint (IQC) on $(u,y)$. The most general IQC theory also allows dynamic mulitpliers, with \eqref{eq:conicity} being the special case where the multiplier is simply the (static) identity matrix \cite{megretski1997system}. 

\textbf{Intermittent conicity:} Equation~\eqref{eq:conicity} constrains the parameter trajectory such that~(\ref{eq:conicity}) holds for all choices of times $t_{1}$ and $t_{2}$ where $t_0\leq t_1 <t_2\leq t_n$. We are interested in trajectories that are only `intermittently conic' in the following sense. For a given parameter trajectory  $\rho(t)|_{t_0}^{t_n},$ identify times $t_0\leq t_{1}<t_{2} <\cdots \leq t_{n}$ such that for every $i\in i_c \subset \mathbb{N}_n$, the system is conic for the parameter trajectory $\rho(t)|_{{t}_{i-1}}^{{t}_{i}}$, and for every $i \in i_{nc}$, where $i_{nc} =\mathbb{N}_n \setminus i_c$, the system is not conic for the parameter trajectory $\rho(t)|_{{t}_{i-1}}^{{t}_{i}}$. Note that the choice of these times is not unique. The \textit{conic parameter region} of the trajectory $\rho(t)|_{t_0}^{t_n}$ is then defined as the set $R_c=\{\rho(t):t\in\bigcup \limits_{i \in i_c}\left[t_{i-1},t_{i}\right]\}$ and the time spent in this region is denoted by  $t_c$. The \textit{nonconic parameter region} of the trajectory $\rho(t)|_{t_0}^{t_n}$ is the set $R_{nc}=\{\rho(t):t\in[t_0,t_n],t\notin R_{c}\}$ and the time spent in this region is denoted by $t_{nc}$. If the set $R_{c}$ is not empty, the system is \textit{intermittently conic} for the parameter trajectory $\rho(t)|_{t_0}^{t_n}$. 

For intermittently conic systems, we extend the definition of conicity proposed in \eqref{eq:conicity} as follows. The LPV system (\ref{lpvsys}) with a given parameter trajectory $\rho(t)|_{t_0}^{t_n}$ is said to be \textit{conic in the average sense} for that parameter trajectory if there exist constants $a$ and $b$ with $0<a<b$ such that
\vspace{-0.5em}
\begin{equation}
\label{eq:av_conicity}
\mathop \int_{t_0}^{t_n} w(u,y)dt \geq 0
\vspace{-0.5em}
\end{equation}
holds for all $u(t)$ satisfying Assumption \ref{input}. If the system is conic in the average sense for every parameter trajectory in a set $\mathcal{S} \subseteq \mathcal{A}$, then it is said to be conic in the average sense for that set $\mathcal{S}$. We will make the following assumptions on the conic and nonconic parameter regions.
\begin{assum}
	\label{ass:conic}
	For every $i\in i_c$, there exist constants $a$ and $b$ and a constant $\epsilon_{i}$ such that $0<a<b$ and
	\vspace{-0.7em}
	\begin{equation}
	\label{conic_eqn}
	\mathop \int_{t_{i-1}}^{t_{i}} w(u,y)dt \geq \epsilon_{i}(t_{i}-t_{i-1}).
	\end{equation}
\end{assum}
\begin{remark}
{	Assumption \ref{ass:conic} is similar to the notion of quasi-QSR dissipativity with the supply function given by the left hand side of~(\ref{conic_eqn}) (see, e.g.,~\cite[Chapter 4]{lozano2013dissipative}). However, by Assumption \ref{input}, we do not allow uniformly zero inputs over any finite time horizon. Therefore, classical constraints on the sign of $\epsilon_i$ (\cite[Chapter 4]{lozano2013dissipative}
	) are not applicable in our case. We note that Assumption \ref{ass:conic} may be relaxed to allow zero inputs over specific time intervals by imposing the sign constraint $\epsilon_{i}<0$ in such intervals. In this paper, we avoid such a formulation due to increased conservatism. 
} 
\end{remark}
\begin{assum}\label{nonconic_assumption}
	For every $i\in i_{nc}$, there exists 
	a constant $\alpha_{i}>0$ such that 
		\vspace{-0.68em}
		\begin{equation}
		\label{nonconic_eqn}
	\mathop \int_{t_{i-1}}^{t_{i}} w(u,y)dt\geq -\alpha_{i} \left(t_{i}-t_{i-1}\right) \int_{t_{i-1}}^{t_{i}} u^2(t)dt .
	\end{equation}
\end{assum} 
Note that the $\int_{t_{i-1}}^{t_{i}} u^2(t)dt$ term in \eqref{nonconic_eqn} ensures that no conic parameter region satisfies \eqref{nonconic_eqn}. In the context of Assumption \ref{nonconic_assumption}, $a$ and $b$ represent the smallest constants such that \eqref{nonconic_eqn} holds with $0<a<b$, rather than conventional sector bounds. The constants $\epsilon_{i}$ and $\alpha_{i}$ depend on the parameter trajectory $\rho(t)|_{t_{i-1}}^{t_{i}}$; we drop this dependence for simplicity of notation. {In general, we jointly compute the tightest bounds for $\epsilon_{i}$ and $\alpha_{i}$ with the sector bounds $[a,b]$ (see Appendix \ref{determining}).}

\textbf{Problem statement:} With these assumptions,  the aim of this paper is to
\begin{itemize}
	\item[(i)] derive conditions under which an LPV system (\ref{lpvsys}) with intermittently conic behavior for a parameter trajectory $\rho(t)|_{t_0}^{t_n}$ is conic in the average sense for that trajectory, 
	\item[(ii)] derive a conic-sector theorem that ensures $\mathcal{L}_2$ stability when system (\ref{lpvsys}) with intermittent conic behavior for a parameter trajectory is connected in negative feedback with another conic system, and
	\item[(iii)] design a conic controller using this conic sector theorem to stabilize an intermittently conic LPV system while guaranteeing some gain performance.
\end{itemize}
\section{Conicity of LPV Systems with Intermittent Conic Behavior}\label{exact}
We begin by deriving conditions for average system conicity in terms of the fractions of time spent in conic and nonconic parameter regions. 
\begin{lemma}[Discrete Case]\label{res1}
	Let the parameter trajectories of the intermittently conic system (\ref{lpvsys}) be restricted to the set of discrete parameter trajectories $\mathcal{A}^d \subset \mathcal{A}$. Then, \eqref{lpvsys} is conic in the average sense if, for every $\rho(i)|_{t_0}^{t_n} \in \mathcal{A}_d$, we have
	\setlength{\arraycolsep}{0pt}
	\vspace{-0.5em}
	{
		\begin{equation}\label{res_1}
		\begin{gathered}
		\sum \limits_{i=1}^{n} c_i\mu_i \geq 0, \, \mu_i=\frac{t_{i}-t_{i-1}}{t_n-t_0}, \, {c_i} = \left\{ \begin{array}{*{20}{c}}
		{\epsilon_{i},\,i\in i_{c}} \\ 
		{-\alpha_{i}\bar{u}_1^2,\,i \in i_{nc} }
		\end{array} \right. .
		\end{gathered}
		\end{equation}
	}
\end{lemma}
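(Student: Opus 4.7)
The discrete structure of $\rho(i)|_{t_0}^{t_n}\in\mathcal{A}^d$ partitions $[t_0,t_n]$ into subintervals indexed by $i_c\cup i_{nc}$ on which the per-subinterval bounds from Assumption~\ref{ass:conic} and Assumption~\ref{nonconic_assumption} apply directly. The proof should therefore reduce to an input-output level calculation: split the integral in \eqref{eq:av_conicity}, apply the per-interval bounds, eliminate the residual $u$-dependence using Assumption~\ref{input}, and recognize the resulting expression as the hypothesis \eqref{res_1}. No Lyapunov or state-space argument is needed; the lemma is purely an IQC-level summation result.

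\textbf{Key steps.} First, I decompose
\[
\int_{t_0}^{t_n} w(u,y)\,dt \;=\; \sum_{i\in i_c}\int_{t_{i-1}}^{t_i} w(u,y)\,dt \;+\; \sum_{i\in i_{nc}}\int_{t_{i-1}}^{t_i} w(u,y)\,dt,
\]
using that the indices $\{1,\dots,n\}$ partition $[t_0,t_n]$ by construction. Second, I apply Assumption~\ref{ass:conic} to each conic term to obtain the lower bound $\epsilon_i(t_i-t_{i-1})$, and Assumption~\ref{nonconic_assumption} to each nonconic term to obtain the lower bound $-\alpha_i(t_i-t_{i-1})\int_{t_{i-1}}^{t_i} u^2(t)\,dt$. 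Third, since the conclusion must hold uniformly over every input satisfying Assumption~\ref{input}, I eliminate the remaining $\int u^2\,dt$ factor by invoking the norm bound from Assumption~\ref{input}, obtaining a nonconic-term lower bound proportional to $\bar u_1^2(t_i-t_{i-1})$. Finally, summing and factoring out $(t_n-t_0)$ gives
\[
\int_{t_0}^{t_n} w(u,y)\,dt \;\geq\; \sum_{i=1}^{n} c_i(t_i-t_{i-1}) \;=\; (t_n-t_0)\sum_{i=1}^{n} c_i\mu_i \;\geq\; 0,
\]
where $c_i$ and $\mu_i$ are exactly as defined in \eqref{res_1}, and the last inequality is the hypothesis of the lemma. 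This establishes \eqref{eq:av_conicity}, i.e., conicity in the average sense over $[t_0,t_n]$.

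\textbf{Main obstacle.} The step that requires the most care is the third one: the nonconic bound in Assumption~\ref{nonconic_assumption} carries a signed factor $-\alpha_i(t_i-t_{i-1})$ multiplying $\int u^2\,dt$, so the uniform bound from Assumption~\ref{input} must be invoked with the direction that preserves the lower bound on $\int w\,dt$ and produces exactly the coefficient $\bar u_1^2$ appearing in \eqref{res_1}. This is a bookkeeping detail rather than a genuine technical difficulty, but it is where the conventions of the paper on $\bar u_1$ versus $\bar u_2$ must be respected carefully so that the per-interval contributions telescope into the hypothesis. Once this sign alignment is handled, the remainder of the proof is a direct summation and normalization by $(t_n-t_0)$.
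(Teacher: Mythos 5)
Your proof follows essentially the same route as the paper's: the same decomposition of $\int_{t_0}^{t_n} w(u,y)\,dt$ over the conic and nonconic subintervals, the same per-interval application of Assumptions~\ref{ass:conic} and~\ref{nonconic_assumption}, and the same use of Assumption~\ref{input} to replace the residual $\int u^2\,dt$ factor by $\bar u_1^2$ times the interval length before comparing with \eqref{eq:av_conicity}. The one delicate point you flag---that lower-bounding $-\alpha_i(t_i-t_{i-1})\int_{t_{i-1}}^{t_i} u^2\,dt$ would, strictly speaking, call for an \emph{upper} bound on the input energy rather than the constant $\bar u_1^2$---is handled in exactly the same way in the paper's own derivation, so your treatment is faithful to the published argument.
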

We can readily generalize Lemma \ref{res1} to the case where the LPV system \eqref{lpvsys} can operate along any admissible parameter trajectory as follows.
\begin{theorem}\label{res2}
	The system (\ref{lpvsys}) is conic in the average sense if, for every $\rho(t)|_{t_0}^{t_n} \in \mathcal{A}$, we have
	\begin{equation}
	\begin{gathered}\label{res_p}
	\int \limits_{t \in t_c} \epsilon(\rho(t)) dt \geq
	\bar{u}_1^2\int \limits_{t \in t_{nc}}\alpha(\rho(t)) dt,
	\end{gathered}
	\end{equation}
	where $\epsilon(\rho(t))$ and $\alpha(\rho(t))$ are continuous approximations of $\epsilon_i$ and $\alpha_i$ as defined in Appendix \ref{proofs}.
\end{theorem}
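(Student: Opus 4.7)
The plan is to derive Theorem~\ref{res2} from Lemma~\ref{res1} by approximating an arbitrary admissible trajectory by a sequence of discrete trajectories and passing to the limit. Given $\rho(t)\in\mathcal{A}$ satisfying the hypothesis \eqref{res_p}, I would partition $[t_0,t_n]$ into $N$ subintervals of width $\Delta t=(t_n-t_0)/N$ and define a piecewise-constant approximation $\rho_N(t)$ equal to $\rho(t_{i-1})$ on each subinterval $[t_{i-1},t_i]$. Since $\rho(\cdot)$ is piecewise continuously differentiable and the rate bounds in~\eqref{rhobound} hold, $\rho_N\to\rho$ uniformly away from the finitely many points where $\dot\rho$ is discontinuous, and each $\rho_N$ lies in $\mathcal{A}^d$, so Lemma~\ref{res1} is directly applicable to each $\rho_N$.

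The next step is to relate the per-segment constants $\epsilon_i$ and $\alpha_i$ of Assumptions~\ref{ass:conic} and~\ref{nonconic_assumption} to the continuous extensions $\epsilon(\rho)$ and $\alpha(\rho)$ constructed in Appendix~\ref{proofs}. Because the per-segment constants arise from the LMI characterization in Appendix~\ref{determining} applied to a trajectory that is constant on each subinterval, one has $\epsilon_i=\epsilon(\rho(t_{i-1}))+o(1)$ and $\alpha_i=\alpha(\rho(t_{i-1}))+o(1)$ as $\Delta t\to 0$. Multiplying the discrete condition~\eqref{res_1} through by $t_n-t_0$ then produces a Riemann sum
\[
\sum_{i\in i_c}\epsilon_i\,\Delta t \;-\;\bar{u}_1^2\sum_{i\in i_{nc}}\alpha_i\,\Delta t ,
\]
which converges, as $N\to\infty$, to $\int_{t\in t_c}\epsilon(\rho(t))\,dt-\bar{u}_1^2\int_{t\in t_{nc}}\alpha(\rho(t))\,dt$. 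Hypothesis~\eqref{res_p} makes this limit non-negative, so~\eqref{res_1} holds for all sufficiently refined $\rho_N$.

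The final step is to transport the inequality $\int_{t_0}^{t_n}w(u,y_N)\,dt\geq 0$ obtained from Lemma~\ref{res1} for the discretized trajectory to the true trajectory $\rho$. Here $y_N$ denotes the output of~\eqref{lpvsys} driven by $u$ under the parameter $\rho_N$. Continuous dependence of the solution of~\eqref{lpvsys} on the parameter trajectory (for bounded $u$ and continuous coefficient matrices $A,B,C,D$) yields $y_N\to y$ in $L^2$ on $[t_0,t_n]$, and the quadratic form $w$ is jointly continuous, so passing to the limit gives $\int_{t_0}^{t_n}w(u,y)\,dt\geq 0$, which is exactly the definition~\eqref{eq:av_conicity} of average conicity. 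The main obstacle I anticipate is the uniform convergence $\epsilon_i\to\epsilon(\rho)$ and $\alpha_i\to\alpha(\rho)$: because these constants arise from a finite-horizon LMI on each segment, their dependence on $\Delta t$ must be controlled uniformly in the partition, which is the purpose of the regularized construction of $\epsilon(\rho)$ and $\alpha(\rho)$ deferred to Appendix~\ref{proofs}. Handling the boundary behavior at the discontinuity points of $\dot\rho$ and ensuring that Assumption~\ref{input} is propagated through the limit are secondary technical points.
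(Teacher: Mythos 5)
Your overall Riemann-sum strategy matches the paper's, but you route it through a discretization of the \emph{dynamics} that the paper does not use, and that detour introduces a genuine gap. The paper's proof never replaces $\rho$ by a piecewise-constant $\rho_N$ and never considers an approximate output $y_N$: Assumptions~\ref{ass:conic} and~\ref{nonconic_assumption} are applied segment-by-segment to the \emph{true} trajectory, which gives $\int_{t_0}^{t_n} w(u,y)\,dt \geq S_r$ for the actual output $y$ and for \emph{every} partition, where $S_r$ is the right-hand side of \eqref{bound}. One then only has to identify $\lim_{|P|\to 0} S_r$ with $\int_{t\in t_c}\epsilon(\rho(t))\,dt - \bar u_1^2\int_{t\in t_{nc}}\alpha(\rho(t))\,dt$ (this limit is, in effect, how $\epsilon(\rho(\cdot))$ and $\alpha(\rho(\cdot))$ are defined) and invoke \eqref{res_p} on that limit. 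No continuous-dependence argument and no convergence $y_N\to y$ is needed.

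The concrete gap in your version is the step ``the limit is non-negative, so \eqref{res_1} holds for all sufficiently refined $\rho_N$.'' Non-negativity of a limit does not imply non-negativity of the approximating sums: if \eqref{res_p} holds with equality --- which the theorem statement permits and which is precisely the operating point enforced by the control design in \eqref{condition1} --- the discrete sums $\sum_i c_i\mu_i$ can be strictly negative for every $N$, so Lemma~\ref{res1} is never applicable to any $\rho_N$ and your argument stalls before the limiting step. The paper's ordering of quantifiers avoids this entirely: the inequality $\int_{t_0}^{t_n} w(u,y)\,dt \geq S_r$ holds unconditionally for each partition, and hypothesis \eqref{res_p} is used exactly once, on the limit of the lower bounds. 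Your remaining concerns (uniform control of $\epsilon_i\to\epsilon(\rho)$, propagating Assumption~\ref{input} through the limit, $y_N\to y$ in $L^2$) are artifacts of the detour and disappear once you bound the true integral directly.
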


The proofs of Lemma \ref{res1} and Theorem \ref{res2}, along with the formal definitions of $\epsilon(\rho(t))$ and $\alpha(\rho(t))$, can be found in Appendix \ref{proofs}. Lemma \ref{res1} and Theorem \ref{res2} formalize the idea that the LPV system (\ref{lpvsys}) is conic in the average sense if it spends a sufficiently large fraction of time operating in conic parameter regions.

\section{Conic Sector Theorem for Intermittently Conic Systems}\label{conicsector}
Consider a negative feedback interconnection of LPV systems $G_1$ and $G_2$ as shown in Fig. \ref{negativefeedback}, where the dynamics of $G_1$ and $G_2$ are described as follows.
\begin{align}\label{lpvsys1}
G_1: \quad  &\dot{x}_p(t) = A_p({\rho}_p(t))  x_p(t) + B_p({\rho}_p(t))  u_p(t) \nonumber\\
&y_p(t) = C_p(\rho_p(t))  x_p(t) + D_p(\rho_p(t))  u_p(t)
\end{align}
\vspace{-2.2em}
\begin{align}\label{lpvsys2}
G_2: \quad  &\dot{x}_c(t) = A_c({\rho}_c(t))  x_c(t) + B_c({\rho}_c(t))  u_c(t) \nonumber\\
&y_c(t) = C_c(\rho_c(t))  x_c(t) + D_c(\rho_c(t))  u_c(t)
\end{align}

Let $\mathcal{A}_p$ and $\mathcal{A}_c$ denote the admissible parameter trajectories of $G_1$ and $G_2$ respectively. Let $G_2$ be conic with sector bounds $[a_c,b_c]$ and $G_1$ be intermittently conic with sector bounds $[a_p,b_p]$. We also assume $||u_p(t)||_2^2 \in [\bar u_{p,1}^2, \bar u_{p,2}^2]$ with constants $\bar u_{p,1}>0$ and $\bar u_{p,2}<\infty$.  We now examine the stability of the negative feedback interconnection of $G_1$ and $G_2$ when $G_1$ is intermittently conic as defined in Section \ref{problem}.
\begin{figure}[!t]
	\centering
	\includegraphics[scale=0.27]{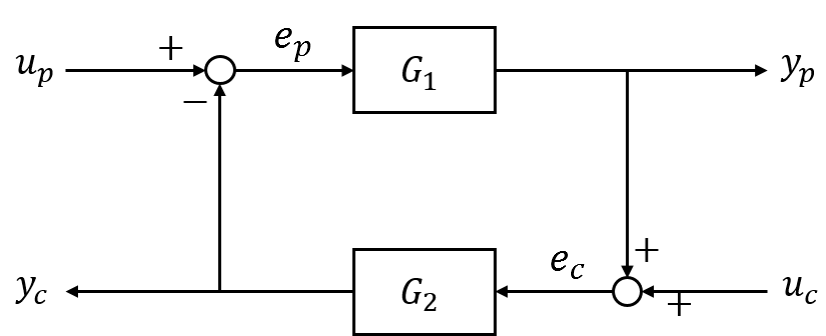}
	\caption{Negative feedback interconnection of $G_1$ and $G_2$}
	\vspace{-1.5em}
	\label{negativefeedback}
\end{figure}
\vspace{-0.8em}
\begin{theorem}[Modified Conic Sector Theorem]\label{oneconic}
 The negative feedback interconnection of $G_1$ and $G_2$ as shown in Fig. \ref{negativefeedback} is $\mathcal{L}_2$ stable if, for every admissible $\rho_p(t) \in \mathcal{A}_p$ and $\rho_c(t) \in \mathcal{A}_c$, $t_n \to \infty$,
	\vspace{-0.5em}
	\begin{equation}\label{oneconic1}
	\int \limits_{t \in {t_{c}}} {\epsilon_p}\left( {\rho_p \left( t \right)} \right)dt \geq \bar{u}_{p,1}^2 \int \limits_{t \in {t_{nc}}} \alpha_p \left( {\rho_p \left( t \right)} \right)dt, \quad \text{and,}
	\end{equation} 
	\vspace{-0.5em}
	{
	\begin{equation}\begin{gathered}
		\frac{1}{b_c}{+}a_p>0, \; \left(\frac{1}{b_c}{+}a_p\right)\left(\frac{1}{b_p}{+}a_c\right){-}\frac{1}{4}\left(\frac{a_c}{b_c}{-}\frac{a_p}{b_p}\right)^2>0.
		\label{oneconic2}
		\end{gathered}\end{equation} }
\end{theorem}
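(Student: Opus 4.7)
The plan is to combine the average-sense IQC for $G_1$ furnished by Theorem \ref{res2} with the standard conic IQC for $G_2$, substitute the feedback relations, and verify that condition (\ref{oneconic2}) is precisely the positive-definiteness criterion for the $2\times 2$ coupling matrix that emerges after the substitution. First, since hypothesis (\ref{oneconic1}) is verbatim the hypothesis of Theorem \ref{res2} applied to $G_1$ over $[t_0,t_n]$, it immediately delivers
\begin{equation*}
\int_{t_0}^{t_n} w_p(u_p,y_p)\,dt \geq 0
\end{equation*}
with sector $[a_p,b_p]$, while conicity of $G_2$ supplies the analogous inequality with sector $[a_c,b_c]$.

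Second, I would substitute the closed-loop relations $u_p = r_1 - y_c$ and $u_c = r_2 + y_p$ (for external injections $r_1,r_2$ around the loop of Fig.~\ref{negativefeedback}) into both IQCs and add them. A direct expansion shows that the quadratic-in-$(y_p,y_c)$ portion rearranges to
\begin{equation*}
\int_{t_0}^{t_n}\! \begin{bmatrix} y_p \\ y_c \end{bmatrix}^{\!\prime}\! M \begin{bmatrix} y_p \\ y_c \end{bmatrix} dt \;\leq\; \int_{t_0}^{t_n}\! L(y_p,y_c,r_1,r_2)\,dt,
\end{equation*}
where $L$ is linear in $(y_p,y_c)$ and quadratic in $(r_1,r_2)$, and
\begin{equation*}
M = \begin{bmatrix} \bigl(\tfrac{1}{b_p}+a_c\bigr) I & -\tfrac{1}{2}\bigl(\tfrac{a_c}{b_c}-\tfrac{a_p}{b_p}\bigr) I \\ -\tfrac{1}{2}\bigl(\tfrac{a_c}{b_c}-\tfrac{a_p}{b_p}\bigr) I & \bigl(\tfrac{1}{b_c}+a_p\bigr) I \end{bmatrix}.
\end{equation*}
The two conditions in (\ref{oneconic2}) are exactly the Sylvester criterion for $M\succ 0$: the first gives positivity of the $(2,2)$ block, and the second gives positivity of the determinant, from which positivity of the $(1,1)$ block follows automatically.

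Finally, with $M\succ 0$, the left-hand side is bounded below by $\lambda_{\min}(M)\bigl(\|y_p\|_2^2+\|y_c\|_2^2\bigr)$, and the linear cross terms in $L$ involving $r_1,r_2$ are absorbed via Young's inequality, yielding $\|y_p\|_2,\|y_c\|_2 \leq \gamma(\|r_1\|_2+\|r_2\|_2)$ for some finite $\gamma$; $\mathcal{L}_2$ bounds on $u_p,u_c$ then follow from the feedback relations, establishing $\mathcal{L}_2$ stability. I expect the main subtlety to be bookkeeping: extracting $M$ cleanly from the cross-multiplied sum, and ensuring that the closed-loop input $u_p=r_1-y_c$ respects the norm bound in Assumption \ref{input} that Theorem \ref{res2} requires. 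The latter is essentially an admissibility condition on the exogenous input class rather than a core difficulty in the main argument.
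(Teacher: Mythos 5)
Your proposal is correct and follows essentially the same route as the paper's proof: combine the average-sense IQC for $G_1$ (obtained from \eqref{oneconic1} via Theorem \ref{res2}) with the conic IQC for $G_2$, substitute the interconnection relations, and observe that \eqref{oneconic2} is exactly the Sylvester criterion for your $M$, which is $-Q$ from \eqref{qsr} up to a permutation and sign convention. You merely spell out two points the paper leaves implicit --- the final $\lambda_{\min}(M)$/Young's-inequality gain bound and the admissibility of the closed-loop $u_p$ under Assumption \ref{input} --- both of which are handled correctly.
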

\begin{proof}
Since $G_2$ is conic, for all $t_1$ and~$t_2$ such that~$t_0\leq t_1 <t_2\leq t_n$, it satisfies \setlength{\arraycolsep}{-0.8pt}
{\small
	\vspace{-0.2em}
\begin{equation}
\label{g1}
\mathop \int_{t_1}^{t_2}\left[\begin{array}{c} y_c(t)\\u_c(t)\end{array}\right]' 
\left[ \begin{array}{lr}-\frac{1}{b_c}I&\frac{1}{2}\left(1+\frac{a_c}{b_c}\right)I \\\frac{1}{2}\left(1+\frac{a_c}{b_c}\right)I&-a_cI\end{array}\right]
\left[\begin{array}{c}y_c(t)\\u_c(t)\end{array}\right]dt \geq 0.
\vspace{-0.2em}
\end{equation}}If \eqref{oneconic1} holds, then $G_1$ is conic in the average sense, with \setlength{\arraycolsep}{-1pt}
{\small
	\vspace{-0.2em}
	\begin{equation}
	\label{g2}
	\mathop \int_{t_1}^{t_2}\left[\begin{array}{c} y_p(t)\\u_p(t)\end{array}\right]' 
	\left[ \begin{array}{lr}-\frac{1}{b_p}I&\frac{1}{2}\left(1+\frac{a_p}{b_p}\right)I \\\frac{1}{2}\left(1+\frac{a_p}{b_p}\right)I&-a_pI\end{array}\right]
	\left[\begin{array}{c}y_p(t)\\u_p(t)\end{array}\right]dt \geq 0.
	\vspace{-0.2em}
	\end{equation}}From Fig. \ref{negativefeedback}, we have $e_c=u_c+y_p$ and $e_p=u_p-y_c$, which when used in conjunction with \eqref{g1} and \eqref{g2} gives
		\setlength{\arraycolsep}{1pt}
		\vspace{-0.4em}
		\begin{equation}
		\label{l2_intermediate}
		\mathop \int_{t_0}^{t_n} \left[\begin{array}{c} Y(t) \\ E(t)\end{array}\right]' 
		\left[ \begin{array}{lr}Q&S \\S'&R\end{array}\right]
		\left[\begin{array}{c}Y(t)\\E(t)\end{array}\right]dt \geq 0,
		\vspace{-0.2em}
		\end{equation}
where $Y{=}[y_c^{'},y_p^{'}]$, $E{=}[e_c^{'},e_p^{'}]$, and	\setlength{\arraycolsep}{-0.5pt}
{\small
\begin{equation}\label{qsr}
\begin{gathered}
\vspace{-0.2em}
Q{=}\left[{\begin{array}{*{20}{c}}
	-\frac{1}{b_c}I{-}a_pI& \frac{a_p}{b_p}{-}\frac{a_c}{b_c}\\
	*&-\frac{1}{b_p}I{-}a_cI
	\end{array}}\right], R{=}\left[{\begin{array}{*{20}{c}}
	-a_c I&0\\
	0&-a_p I
	\end{array}}\right],\, \text{and} \\ S{=}\left[{\begin{array}{*{20}{c}}
	\frac{1}{2}\left(1{+}\frac{a_c}{b_c}\right)I&-a_p I\\
	-a_c I&\frac{1}{2}\left(1{+}\frac{a_p}{b_p}\right)I
	\end{array}}\right].\\
\end{gathered}
\vspace{-0.2em}
\end{equation}}Setting $t_0=0$ and $t_n \to \infty$ in (\ref{l2_intermediate}), the feedback interconnection is $\mathcal{L}_2$ stable if $Q<0$, giving condition (\ref{oneconic2}).
\end{proof}
\begin{remark}
The modified conic sector theorem (Theorem \ref{oneconic}) provides sufficient conditions for stability of the closed-loop system based on the plant and feedback controller parameters. This conic sector theorem can be graphically interpreted as follows. The feedback interconnection of the plant and the controller is stable if the two systems are conic in the average sense (Equation \ref{eq:av_conicity}) and the conic sectors they lie in do not intersect. {Note that Theorem \ref{oneconic} is an input-output stability result; however, with additional assumptions like zero state detectability and reachability, it can be extended to prove Lyapunov stability along the lines of \cite{hill1976stability}.}
\end{remark}

\section{Conic Sector Theorem Based Control Design}\label{controldesign}
In this section, we present a control design approach based on the modified conic sector theorem derived in Section \ref{conicsector}. 
\vspace{-1.2em}
\begin{figure}[b]
	\centering
	\vspace{-1.2em}
	\includegraphics[scale=0.43]{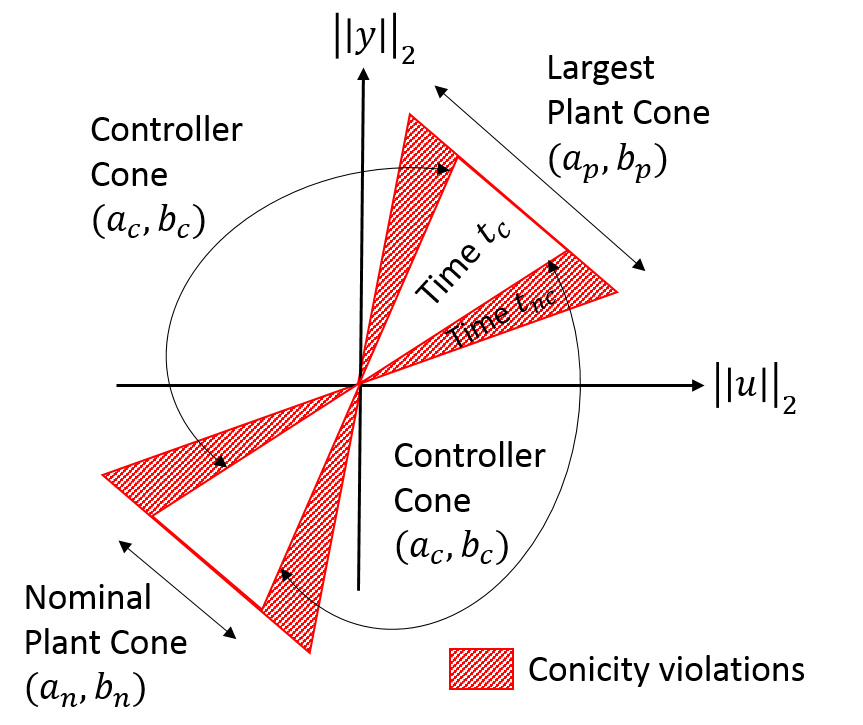}
	\caption{Conic-sector-based control design - key idea}
	\label{idea}
\end{figure}
\begin{figure*}
	\begin{minipage}[c]{\textwidth}
		\scriptsize
		\newcounter{MYtempeqncnt}
		\setcounter{MYtempeqncnt}{\value{equation}}
		\setcounter{equation}{17}
		\renewcommand{\arraystretch}{1}
		\setlength{\arraycolsep}{1.2pt}
		\begin{equation}\label{coniccontrol}
		\left[ {\begin{array}{*{20}{c}}
			{A_{cl}^{'} P(\rho_p) + P(\rho_p){A_{cl}}+\dot \rho_p \frac{\partial{P(\rho_p)}}{\partial \rho_p} + {C_{cl}^{'}}Q C_{cl}}&{P(\rho_p)B_{cl} - C_{cl}^T   S + {C_{cl}^{'}}Q D_{cl}} \\ 
			{{B_{cl}^{'}}P(\rho_p) - {  S^{'}}C_{cl} + {D_{cl}^{'}}QC_{cl}}&{  {D_{cl}^{'}}QD_{cl} - {  S^{'}}D_{cl} - {D_{cl}^{'}}S_{cl} +   R}
			\end{array}} \right]+\epsilon_{cl} I \leq 0
		\end{equation}
		\begin{equation}\label{nonconiccontrol}
		\left[ {\begin{array}{*{20}{c}}
			{A_{cl}^{'} \tilde P(\rho_p) + \tilde P(\rho_p){A_{cl}}+\dot \rho_p \frac{\partial{\tilde P(\rho_p)}}{\partial \rho_p} + {C_{cl}^{'}}  Q C_{cl}}&{\tilde P(\rho_p)B-C_{cl}^{'}   S + {C_{cl}^{'}}  QD_{cl}} \\ 
			{{B_{cl}^{'}}\tilde P(\rho_p) - {  S^{'}}C_{cl} + {D_{cl}^{'}}  QC_{cl}}&{{D_{cl}^{'}}QD_{cl} - {  S^{'}}D_{cl} - {D_{cl}^{'}}S +   R-\alpha_{cl} I}
			\end{array}} \right] \leq 0
		\end{equation}
		\begin{equation}\label{condition1}
		\int_{t_c} \epsilon_{cl}=\int_{t_{nc}}\alpha_{cl}, \quad P(\rho_p)\geq 0, \quad \tilde P(\rho_p)\geq 0
		\end{equation}
			\setcounter{equation}{\value{MYtempeqncnt}}
		\hrulefill
		\end{minipage}
		\vspace{-1.5em}
\end{figure*}

\textbf{Key design idea:} Consider an LPV plant that lies in a range of conic sectors as depicted in Fig. \ref{idea}. A conservative control design approach involves designing the controller such that it lies in a cone that is complementary to the largest conic sector that contains the plant dynamics \cite{joshi2002design}. However, this strategy may result in very low gain controllers. Therefore, we propose the following alternative design. We choose a nominal plant cone that is smaller than the largest plant cone, and design a controller that lies in the conic sector complementary to this nominal cone, based on Theorem \ref{oneconic}. Now, the controller cone will intersect the plant cone in certain parameter regions, making the closed-loop system nonconic in these regions. Based on Theorem \ref{res2}, we then bound the fraction of time that the closed-loop system operates in these nonconic regions. Note that in contrast to standard LPV design methods, we do not use the parameter trajectory for gain scheduling. Rather, we use the knowledge of the parameter trajectory to characterize conic and nonconic parameter regions, which are then used to design a single conic controller for a nominal operating point.

\textbf{Design Procedure:}
Let $\gamma$ be the induced $\mathcal{L}_2$ gain of the feedback interconnection of plant $G_1$ and controller $G_2$ shown in Fig. \ref{negativefeedback}. We would like to choose a nominal plant cone $[a_p^{*},b_p^{*}]$ such that a controller $G_2$ lying in the conic sector $[a_c^{*}=-\frac{1}{b_p^{*}}+\delta,b_c^{*}=-\frac{1}{a_p^{*}}-\delta]$, where $\delta >0$ is a trivially small number, minimizes $\gamma$ while keeping the closed-loop system $\mathcal{L}_2$ stable. This design problem can be formulated as
\vspace{-1.1em}
\begin{equation}
P_1: \quad  \min_{a_p,b_p} \gamma \quad \;\text{s.t.}\int_{t_c}\epsilon_{cl} dt \geq \int_{t_{nc}}\alpha_{cl} dt,
\vspace{-0.8em}
\label{optimal}
\end{equation}
where $\epsilon_{cl}$ and $\alpha_{cl}$ are the conicity indices as defined in (\ref{conic_eqn}) and (\ref{nonconic_eqn}) respectively for the closed-loop system. We now have the following result.
\begin{prop}\label{res_l3}
	If there exist $a_p=a_p^*$ and $b_p=b_p^*$ satisfying matrix inequalities (\ref{coniccontrol})-(\ref{condition1}), where $A_{cl}, B_{cl}, C_{cl}$ and $D_{cl}$ are the closed loop system matrices, and $Q$, $S$ and $R$ are defined in \eqref{qsr}, then the conic controller $G_2$ with $[a_c,b_c]=[-\frac{1}{b_p^{*}},-\frac{1}{a_p^{*}}]$ renders the closed-loop system in Fig.~\ref{negativefeedback} $\mathcal{L}_2$ stable. Further, $a_p^*$ and $b_p^*$ solve problem $P_1$.
\end{prop}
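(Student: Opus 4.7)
The plan is to interpret the matrix inequalities (\ref{coniccontrol})--(\ref{condition1}) as KYP-type certificates that the closed-loop LPV system satisfies the combined integral quadratic inequality (\ref{l2_intermediate}) from the proof of Theorem~\ref{oneconic}, and then to invoke the same terminal argument (negativity of $Q$) to conclude $\mathcal{L}_2$ stability.

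I would introduce candidate storage functions $V_1(x_{cl},\rho_p) = x_{cl}' P(\rho_p) x_{cl}$ on conic intervals and $V_2(x_{cl},\rho_p) = x_{cl}' \tilde P(\rho_p) x_{cl}$ on nonconic intervals, both nonnegative by (\ref{condition1}). Pre- and post-multiplying (\ref{coniccontrol}) by $[x_{cl}',u_p']'$, using $\dot V_1 = x_{cl}'(A_{cl}'P + PA_{cl} + \dot\rho_p\,\partial P/\partial\rho_p) x_{cl} + 2 x_{cl}' P B_{cl} u_p$, and substituting $y_p = C_{cl} x_{cl} + D_{cl} u_p$ produces a pointwise dissipation-style inequality. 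Integrating over a conic interval $[t_{i-1},t_i]$ and using Assumption~\ref{input} to lower-bound $\|u_p\|^2$ yields
\[
\int_{t_{i-1}}^{t_i} w_{cl}(u_p,y_p)\,dt \geq V_1(t_i) - V_1(t_{i-1}) + \int_{t_{i-1}}^{t_i} \epsilon_{cl}\,dt,
\]
i.e., Assumption~\ref{ass:conic} for the closed loop. An analogous manipulation of (\ref{nonconiccontrol}) on a nonconic interval yields the Assumption~\ref{nonconic_assumption} bound with constant $\alpha_{cl}$.

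Summing over $i = 1,\ldots,n$, letting $t_n \to \infty$ from zero initial conditions, discarding the nonnegative telescoping boundary terms via $P, \tilde P \geq 0$, and invoking the balance $\int_{t_c}\epsilon_{cl}\,dt = \int_{t_{nc}} \alpha_{cl}\,dt$ from (\ref{condition1}), one recovers the combined average-conic integral inequality for the closed loop with the multiplier $Q,S,R$ from (\ref{qsr}). By construction, the controller cone $[a_c,b_c]=[-1/b_p^*, -1/a_p^*]$ shifted by $\delta > 0$ makes (\ref{oneconic2}) strict, so $Q < 0$ and the terminal step of the proof of Theorem~\ref{oneconic} delivers $\mathcal{L}_2$ stability of Fig.~\ref{negativefeedback}; optimality of $(a_p^*, b_p^*)$ for $P_1$ then follows because (\ref{coniccontrol})--(\ref{condition1}) is precisely the KYP-based encoding of its feasibility set, so the minimizer of $\gamma$ over this set is returned. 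The main obstacle I anticipate is the careful bookkeeping of the LPV storage-function derivative (particularly the $\dot\rho_p\,\partial P/\partial\rho_p$ contribution that is absent in the LTI KYP lemma), aligning the sign conventions between the $Q,S,R$ blocks in (\ref{coniccontrol})--(\ref{nonconiccontrol}) and the conic supply $w_{cl}(u_p,y_p)$, and converting the pointwise $\|u_p\|^2$ terms generated by the KYP manipulation into the uniform $\bar u_{p,1}^2$ lower bound required to match the constants $c_i$ appearing in Lemma~\ref{res1} and Theorem~\ref{res2}.
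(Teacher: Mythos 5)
Your dissipation-function derivation of the stability half is sound, and it is essentially the route the paper relegates to Appendix~\ref{determining}: inequalities \eqref{coniccontrol} and \eqref{nonconiccontrol} are indeed the KYP-type certificates for the closed-loop conicity indices $\epsilon_{cl}$ and $\alpha_{cl}$, and combining them with the balance condition in \eqref{condition1}, Theorem~\ref{res2}, and Theorem~\ref{oneconic} gives $\mathcal{L}_2$ stability. You are also right to notice that the exactly complementary cone $[a_c,b_c]=[-1/b_p^*,-1/a_p^*]$ makes $1/b_c+a_p=0$, so strictness of \eqref{oneconic2} requires the $\delta$-shift from the design procedure; the proposition as stated glosses over this, and your handling of it is reasonable.

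The genuine gap is in the second claim, that $(a_p^*,b_p^*)$ solves problem $P_1$. Your justification --- that \eqref{coniccontrol}--\eqref{condition1} ``is precisely the KYP-based encoding of its feasibility set, so the minimizer of $\gamma$ over this set is returned'' --- is circular: the three conditions are pure feasibility conditions containing no objective, so satisfying them cannot by itself certify that $\gamma$ is minimized. This is exactly where the paper's proof does its actual work. It (i) extracts an explicit gain bound $\gamma=(\beta+\lambda(R))/\zeta$ from the quadratic form \eqref{l2_intermediate}; (ii) substitutes $a_c=-1/b_p$ and $b_c=-1/a_p$ to show that the numerator and denominator scale as $k_1(b_p/a_p-1)$ and $k_2(b_p/a_p)$, so minimizing $\gamma$ is equivalent to minimizing the nominal radius $r_p=(b_p-a_p)/2$ (problem $P_2$); and (iii) invokes the monotone trade-off between $r_p$ and the extent of nonconic closed-loop operation to conclude that the optimum sits exactly on the constraint boundary --- which is why \eqref{condition1} imposes the \emph{equality} $\int_{t_c}\epsilon_{cl}=\int_{t_{nc}}\alpha_{cl}$ rather than the inequality appearing in $P_1$. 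Your argument uses that equality only as a convenient balance to recover average conicity and never explains why equality, as opposed to slack, characterizes the minimizer; without steps (i)--(iii) or some substitute for them, the optimality claim remains unproven.
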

\begin{proof}
Let $\beta>0$ be the smallest scalar such that $\beta I+R>0$ and $\left(1/{\beta}+Q\right)<\zeta I$, for some $\zeta>0$. From \eqref{l2_intermediate}, with $t_0=0$ and $t_n\to \infty$, we have
	\addtocounter{equation}{3}
$	\int_{0}^{\infty} Y'(t)Y(t)dt \geq -\frac{(\beta +\lambda(R))}{\zeta}  \int_{0}^{\infty} U'(t)U(t)dt$,
	where $Y=[y_c^{'},y_p^{'}]$, $U~=~[u_c^{'},u_p^{'}]$ and $\lambda(R)$ is the maximum eigenvalue of $ R$. Substituting $a_c=-\frac{1}{b_p}$ and $b_c=-\frac{1}{a_p}$, the closed-loop $\mathcal{L}_2$ gain can be written in terms of the plant parameters as 
	$\gamma=(\beta +\lambda(R))/{\zeta}$.
	The numerator and denominator of $\gamma$ can be expressed as $k_1 \left({b_p}/{a_p}-1\right)$ and $k_2\left({b_p}/{a_p}\right)$ respectively, where $k_1$ and $k_2$ are constants independent of the optimization variables $a_p$ and $b_p$. Therefore, the optimization problem $P_1$ is equivalent to
	\vspace{-0.5em}
	\begin{equation}
	P_2: \, \min_{a_p,b_p} r_p=\frac{(b_p-a_p)}{2}, \, \text{s.t.}\int_{t_c}\epsilon_{cl}dt \geq \int_{t_{nc}}\alpha_{cl}dt,
	\vspace{-0.5em}
	\end{equation}
	that is, minimizing the conic radius of the nominal plant cone. 
	As $r_p$ is increased, the intersection of the plant and controller cones increases. Therefore, the solution to problem $P_2$ is
$	[a_p^{*},b_p^{*}]=\{[a_p,b_p]:\int_{t_c}\epsilon_{cl}dt - \int_{t_{nc}}\alpha_{cl}dt=0\}$,
	which is the smallest plant cone such that the closed-loop system remains conic in the average sense defined in Theorem \ref{res2}, implying (\ref{condition1}). 
	Equations (\ref{coniccontrol}) and (\ref{nonconiccontrol}) correspond to computation of $\epsilon_{cl}$ and $\alpha_{cl}$ (see Appendix \ref{determining}). 
\end{proof}
\section{Application Example: Power System with Very High Penetration of Renewable Energy}\label{simulation}
A class of systems where the proposed control design finds application is power grids with a large penetration of renewable energy resources. The dynamics of power grids are described by nonlinear differential algebraic equations, which are extremely difficult to analyze. Therefore, LPV models of power grids have been proposed for the design of wide-area controllers \cite{qiu2004decentralized}. LPV models and switched system approximations have also become relevant in the context of time-varying generation from renewable energy sources \cite{agarwal2017feedback}. Typical approaches to stabilize power grids with LPV models involve the design of parameter varying controllers \cite{qiu2004decentralized} that can be difficult to design and implement. In this section, we demonstrate the design of a conic controller to stabilize power grids with very high penetration of renewable energy, enhance their peak loading capacity and minimize losses.

{ Let the generation sources $G_1,G_2,...,G_{N_g}$ in the grid be represented by LPV systems of the form \eqref{lpvsys1} with the parameters of the $i$-th generator being $\rho_{p,i}=(P_i,Q_i)$, representing the scheduled real and reactive power, and the output being $y_i{=}[v_i]$, representing the voltage at the $i$-th generator bus.} The detailed nonlinear models of the renewable generators and the electrical network can be found in \cite{sivaranjani2013networked}. The system power flow constraint requires
\vspace{-0.7em}
\begin{equation}\label{powerflow}
P_L=\sum_{i=1}^{N_g} {P_i}-P_{loss}\;,\; Q_L=\sum_{i=1}^{N_g} {Q_i}-Q_{loss},
\vspace{-0.5em}
\end{equation}
where the total system load is $P_L+jQ_L$ and the power losses are 
$P_{loss}=0.5\left(\sum_{j=1}^{N} \sum_{i=1}^N |{i_{ij}|^2 r_{ij}}\right)$ and $Q_{loss}=0.5\left(\sum_{j=1}^{N} \sum_{i=1}^N |{i_{ij}|^2 x_{ij}}\right)$, with
$i_{ij}$, $r_{ij}$ and $x_{ij}$ being the line currents, resistances and reactances respectively. Our objective is to minimize the total power loss in the system,
which is equivalent to minimizing the induced $\mathcal{L}_2$ norm from the output vector $y_i$ to the input vector $u_i$, subject to \eqref{powerflow}. 

We consider a modified version of the IEEE 14-bus standard test system as shown in Fig. 3 to illustrate the proposed design. In this modified system, the generator at bus 1 is replaced with an equivalent 600 MW doubly-fed induction generator (DFIG) wind farm (78\% renewable energy penetration). {The system is then linearized around the power flow solution at 4 operating points of $(P_{DFIG},Q_{DFIG})$ (Table I) and a state space model with 48 states, 5 inputs and 5 outputs is obtained. The five control inputs correspond to the reference signals of the Automatic Voltage Regulator (AVR) excitation controllers on the generators at buses 2, 3, 6 and 8, and the speed reference signal of the pitch angle controller on the DFIG at bus 1.} The conic sector bounds for each operating point are listed in Table I. 
\begin{figure}[t]
 		\centering
 		\vspace{-0.45em}
	\includegraphics[trim=3cm 1cm 3cm 0cm,scale=0.205]{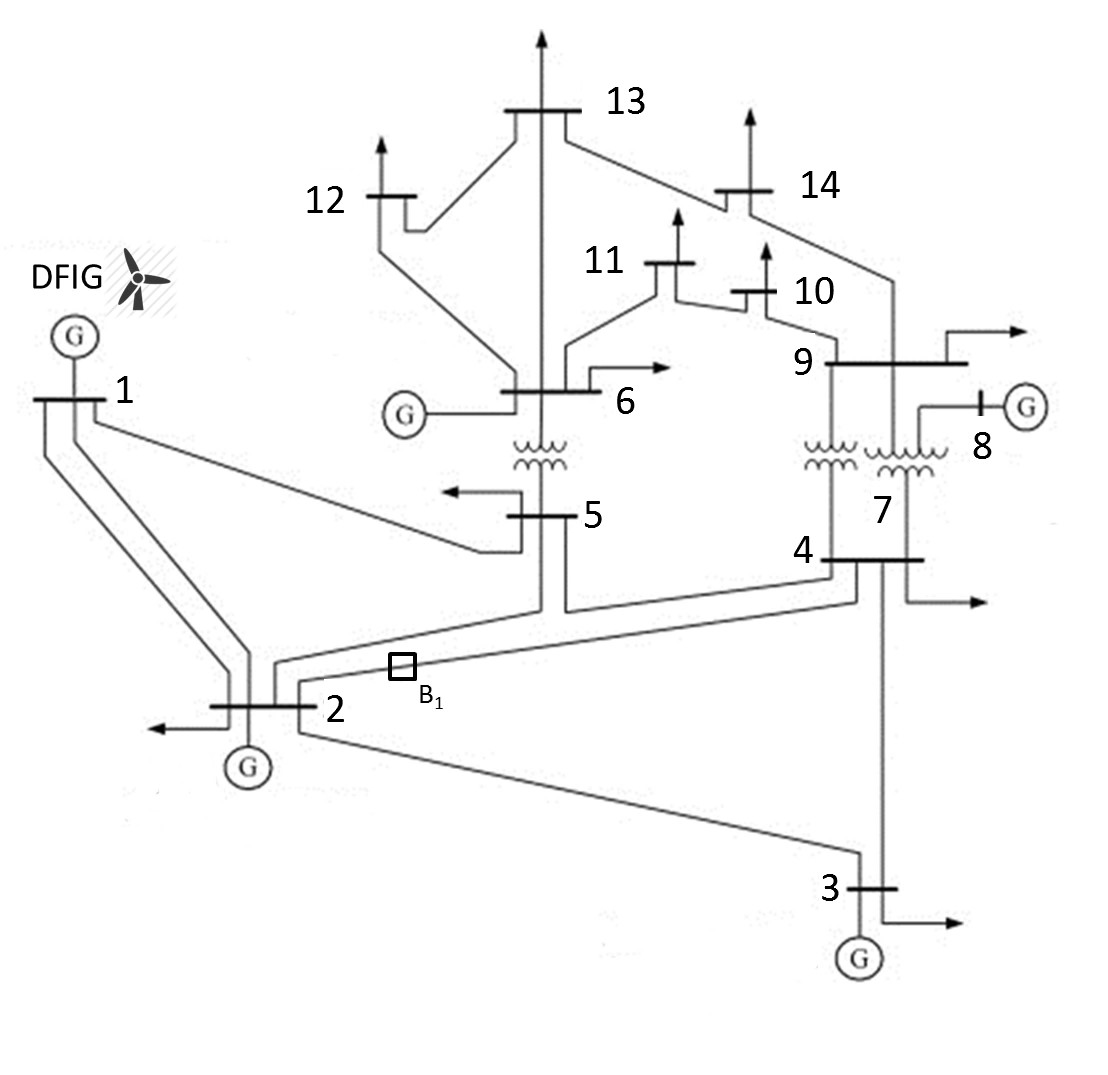}
	\vspace{-1.2em}
\captionof{figure}{Modified IEEE 14-bus test system.}
\vspace{-1em}
\label{14Busmod1}
\end{figure}
	\begin{figure}
	\centering
	\vspace{-0.35em}
	\includegraphics[trim=1cm 0cm 1cm 0.15cm,scale=0.22]{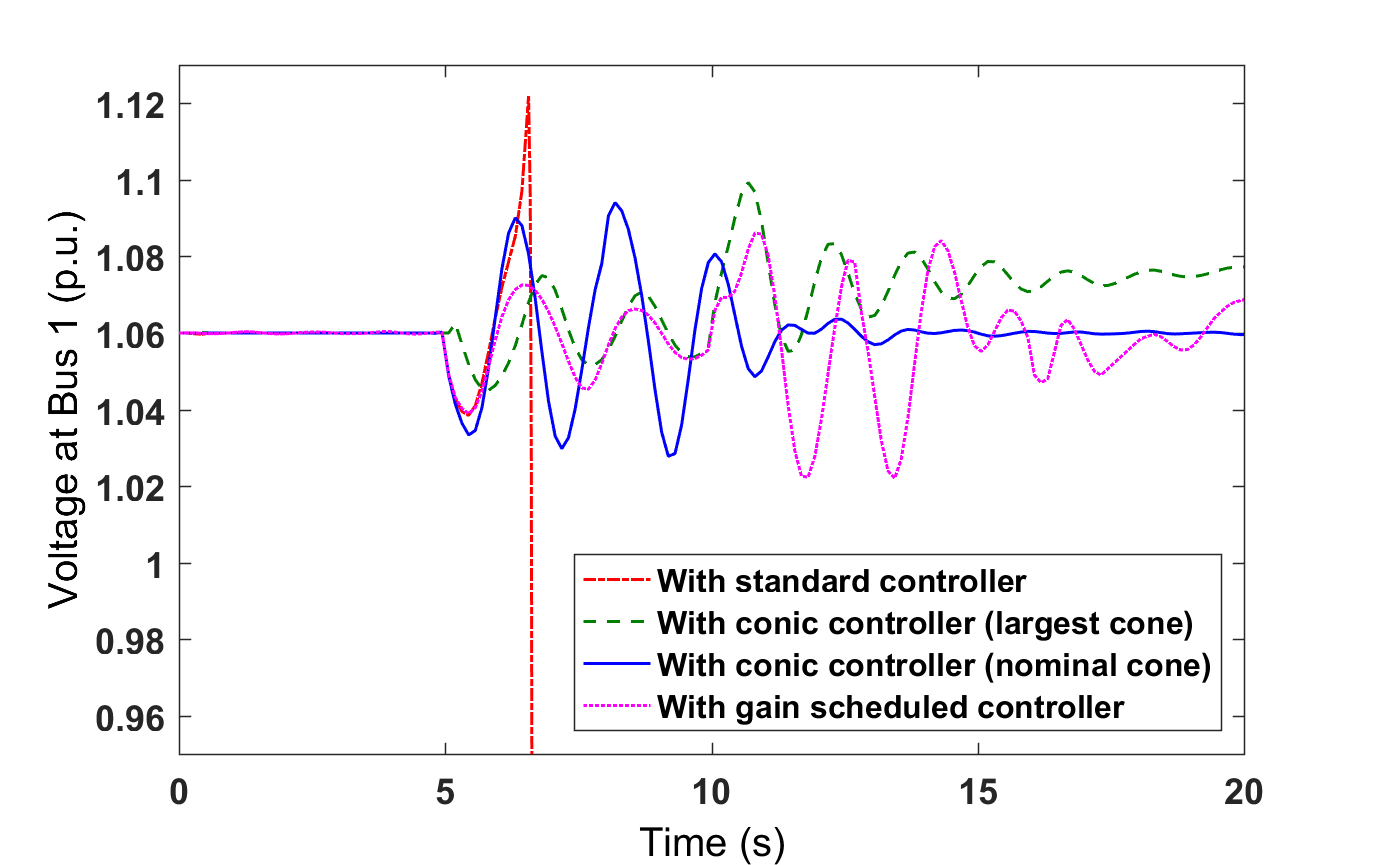}
	\vspace{-0.35em}
	\captionof{figure}{Voltage profiles with conic and standard controllers.}\label{voltage}
	\vspace{-2em}
\end{figure}
We solve (\ref{coniccontrol})-(\ref{condition1}) over a grid of parameter values using YALMIP/SeDuMi \cite{lofberg2004yalmip,sturm1999using} to obtain the nominal operating point of $(P_{DFIG},Q_{DFIG})$ = (350 MW,-28.2 MVAr) and the corresponding nominal plant and controller conic bounds. { We then synthesize a conic controller with system matrices $(A_c,B_c,C_c)$ of dimensions $(48{\times}48)$, $(48{\times}5)$ and $(5{\times}48)$ respectively, satisfying these nominal conic bounds using the procedure in \cite[Section 5]{bridgeman2014conic}.} We also implement three other controllers for comparison, namely, the classical AVR and pitch angle controllers, referred to as the `standard controller', as described in \cite{milano2005open}, a conic controller designed to lie in a sector complementary to the largest plant cone (corresponding to the operating point $P_{DFIG}=$ 420 MW) \cite{joshi2002design}, and a dynamic gain scheduled controller designed to stabilize the system for every operating point \cite{qiu2004decentralized}.
	\begin{table}
	\centering
	\vspace{0em}
	\captionof{table}{Operating points, conic sector bounds and average power losses.}
	\label{simulationtable}
	\footnotesize
	\begin{tabular}{lllllll}
		\vspace{-0.1em}
		$P_{DFIG}$                                         &  $Q _{DFIG}$                            & $a_p$                                  & $b_p$                                 & $P_{loss}$                            & $Q_{loss}$    &   $S_{loss}$                   \\
		(MW)                                        & MVAr &  & & (MW)   & (MVAr)                           & (MVA)                          \\
		420                                           & -34.00                                   & -3.499 &	7.501
		& 46.68                                 & 157.90  & 164.65                               \\
		400                                           & -31.70                                 & -3.199 &	7.201                                & 37.34                                 & 121.32     &126.94                           \\
		{\color{red} \textbf{350}} & {\color{red} \textbf{-28.20}} & {\color{red} \textbf{-2.425}} & {\color{red} \textbf{7.575}} & {\color{red} \textbf{29.43}} & {\color{red} \textbf{90.27}} & {\color{red} \textbf{94.95}} \\
		{\color{red} \textbf{(Nominal)}} & & & & & & \\
		300                                           & -23.05                                & -2.899 &	6.901
		& 21.93                                 & 60.72      & 64.55        
	\end{tabular}
	\vspace{-3.15em}
\end{table}
{ 
		
	We test the performance of the closed-loop system when all real and reactive loads are increased by 15$\%$ between $t=5s$ to $t=10s$ (Fig. \ref{voltage}). This test condition is chosen to compare the responses of the controllers to two major changes in the system parameter (operating point) corresponding to the increase in the total load at $t=5s$ and the decrease in the system load at $t=10s$.} With the standard controller for the 14-bus system \cite{milano2005open}, the system loses stability for the  15$\%$ load change, implying that the maximum loading capability of the system is exceeded. With the conic controller designed for the largest plant cone \cite{joshi2002design}, the system remains stable but exhibits poor performance, with the voltage at the DFIG bus not settling at the nominal value. {With the classical gain scheduled controller, the system displays adequate transient response to the first operating point change at $t=5s$, but responds poorly to the second change at $t=10s$. On the other hand, the system with the conic controller designed for the nominal cone responds well to this load change and maintains a voltage level close to the nominal value, which is an advantage of the increased controller gain obtained by allowing the intermittently conic operation. This result indicates that the conic controller enhances the maximum loading capability of the grid. We also observe from Table I that the nominal conic controller has a significantly lower net power loss (94.95 MVA) as compared to the conic controller designed for the largest plant cone (164.65 MVA) and the classical gain scheduled controller (116.36 MVA).}
\vspace{-0.46em}
\section{Conclusion}
We derived a modified conic sector theorem for intermittently conic LPV systems, and employed this theorem to synthesize conic controllers that are less conservative than traditional designs. We also demonstrated the performance and advantages of this design in stabilizing a power grid with very high penetration of renewable energy.
\vspace{-0.46em}
\appendices
\section{Determining Conic Bounds and Conicity Indices}\label{determining}
\begin{theorem}
	\label{conicbounds}
	The LPV system (\ref{lpvsys}) is conic with sector bounds $[a,b]$ if and only if there exists a symmetric $\hat P(\rho)>0$ such that for all $\rho \in \mathcal{A}$, the following holds:
	{\footnotesize
		\vspace{-0.5em}
		\begin{equation}\label{coniclmi}
		\left[ {\begin{array}{*{20}{c}}
			{A' \hat P(\rho) + \hat P(\rho){A}+\dot \rho \frac{\partial{\hat P(\rho)}}{\partial t} +{C'} C}&{\hat P(\rho)B - \frac{a+b}{2}C'  + {C'}D} \\ 
			*&{  {D'}D - \frac{a+b}{2}(D + {D'}) + abI}
			\end{array}} \right]{\leq}0.
		\vspace{-0.3em}
		\end{equation}}
\end{theorem}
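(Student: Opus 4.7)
The plan is to interpret the LMI~(\ref{coniclmi}) as a dissipation inequality associated with the parameter-dependent quadratic storage function $V(x,\rho)=x^{\prime}\hat P(\rho)x$ and the QSR-type supply rate $b\,w(u,y)$, where $w$ is the conic quadratic form from Section~\ref{problem}. Sufficiency then reduces to integrating this dissipation inequality along trajectories of~(\ref{lpvsys}), while necessity follows from an LPV extension of the Kalman--Yakubovich--Popov lemma for this supply rate.

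For sufficiency I would first compute $\dot V$ along~(\ref{lpvsys}) using the chain rule, obtaining $\dot V = x^{\prime}(A^{\prime}\hat P+\hat PA+\dot\rho\,\partial\hat P/\partial\rho)x + 2x^{\prime}\hat PBu$. Pre- and post-multiplying~(\ref{coniclmi}) by $[x^{\prime}\;u^{\prime}]$ and its transpose and substituting $y=Cx+Du$, the resulting scalar inequality regroups, after using the scalar identity $u^{\prime}D^{\prime}u=u^{\prime}Du$, as $\dot V+y^{\prime}y-(a+b)y^{\prime}u+ab\,u^{\prime}u \le 0$. Noting that $b\,w(u,y)=-y^{\prime}y+(a+b)y^{\prime}u-ab\,u^{\prime}u$, this is precisely $\dot V \le b\,w(u,y)$. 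Integrating from $t_1$ to $t_2$ under zero initial state $x(t_1)=0$ and using $\hat P(\rho)\succ 0$ yields $\int_{t_1}^{t_2} b\,w\,dt \ge V(x(t_2),\rho(t_2))\ge 0$; dividing by $b>0$ recovers~(\ref{eq:conicity}). Since $\rho\in\mathcal{A}$ was arbitrary, conicity over $\mathcal{A}$ follows.

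For necessity I would, following the standard dissipativity route, define the available storage $V^{\star}(x,\rho)=\sup\{-\int_{0}^{T} b\,w(u,y)\,dt \,:\, x(0)=x,\,\rho(0)=\rho,\,T\ge 0\}$. Conicity forces $V^{\star}$ to be finite and nonnegative, and the linearity of~(\ref{lpvsys}) together with the quadratic form of $w$ forces $V^{\star}$ to be quadratic in $x$, identifying a symmetric $\hat P(\rho)\succeq 0$; strict positivity of $\hat P$ is then obtained via a standard perturbation of the supply rate.

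The main obstacle is the necessity direction, specifically showing that $\hat P(\rho)$ is smooth enough in $\rho$ for the parameter-derivative term $\dot\rho\,\partial\hat P/\partial\rho$ to make sense and that the infinitesimal form of the available-storage inequality matches~(\ref{coniclmi}). This is the classical LPV-KYP difficulty and is typically handled by restricting to quadratic-in-$x$ storage functions with smooth matrix coefficients and exploiting the range and rate bounds~(\ref{rhobound}) to obtain uniformity in $\rho$. The sufficiency direction, which is the one actually invoked in the control design of Section~\ref{controldesign} to compute $\epsilon_{cl}$, $\alpha_{cl}$ and the sector bounds by LMI feasibility, is not affected by this subtlety.
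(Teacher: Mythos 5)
Your sufficiency argument is essentially the route the paper intends: the paper omits the proof and defers to the KYP-type result in \cite[Theorem 2.1]{joshi2002design}, and your computation --- reading \eqref{coniclmi} as the dissipation inequality $\dot V \le b\,w(u,y)$ for $V=x'\hat P(\rho)x$, which is exactly what pre- and post-multiplying by $(x',u')$ and substituting $y=Cx+Du$ yields --- is correct and complete for that half. One mismatch with the paper's definition of conicity: \eqref{eq:conicity} is required over \emph{every} subinterval $[t_1,t_2]\subseteq[t_0,t_n]$, and integrating the dissipation inequality from $t_1$ gives only $\int_{t_1}^{t_2} b\,w\,dt \ge V(x(t_2))-V(x(t_1))$, which is nonnegative only when $x(t_1)=0$. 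You should either invoke the zero-initial-condition convention for IQCs (which the paper implicitly adopts when it calls \eqref{eq:conicity} a finite-horizon IQC) or state the conclusion only for intervals starting from the zero state.

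The necessity direction as written has a genuine gap. Defining the available storage $V^{\star}(x,\rho)$ and observing that it is finite, nonnegative and quadratic in $x$ does not by itself produce a single symmetric matrix function $\hat P(\rho)$ that is differentiable in $\rho$ and satisfies \eqref{coniclmi} pointwise over all admissible $(\rho,\dot\rho)$: for a rate-bounded LPV system the available storage at $(x,\rho)$ depends on the set of admissible future parameter trajectories emanating from $\rho$, and there is no a priori reason it is even continuous in $\rho$, let alone that its Hessian in $x$ varies smoothly. Your proposed remedy --- ``restricting to quadratic-in-$x$ storage functions with smooth matrix coefficients'' --- assumes exactly what needs to be proved, so the ``only if'' is asserted rather than established. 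This is not a defect relative to the paper, which also does not prove it (and whose citation is to a frozen-parameter LTI result), but a self-contained treatment would either weaken the statement to sufficiency only, which is all that Sections \ref{controldesign} and Appendix \ref{determining} actually use, or supply the additional argument (uniformity over the compact $(\rho,\dot\rho)$ set in \eqref{rhobound} together with a smoothing or restriction-to-quadratic-storage hypothesis stated as an assumption) under which a differentiable $\hat P(\rho)$ can be extracted.
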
 
The proof is along the lines of \cite[Theorem 2.1]{joshi2002design} and is omitted here. 
In conic parameter regions, $\exists P(\rho){>}0 $ such that 
\vspace{-0.8em}
{\small
\begin{equation}\label{lyapunov1}
{A}' P(\rho) +  P(\rho){A} < 0.
\vspace{-0.4em}
\end{equation}}From (\ref{conic_eqn}), following standard literature \cite{wu1996}, and defining $Q=-1/b$, $S=\frac{1}{2}(1+\frac{a}{b})$, and $R=-a I$, we can derive
{\scriptsize
	\vspace{-0.4em}
	\setlength{\arraycolsep}{-0.9pt}
\begin{equation}\label{conicindex}
\left[ {\begin{array}{*{20}{c}}
	{A^{'} P(\rho) +  P(\rho){A}+\dot \rho \frac{\partial{ P(\rho)}}{\partial t} + {C'} Q C}&{ P(\rho)B - C'  S + {C'} Q D} \\ 
	{{B'} P(\rho)-{ S'}C + {D'} Q C}&{ {D'}QD-{ S'}D-{D'}S + R}
	\end{array}} \right]+\epsilon I{\leq}0.
\end{equation}}This matrix inequality can then be solved with constraints (\ref{rhobound}) and (\ref{lyapunov1}) to determine $\epsilon(\rho(t))$. Similarly, from (\ref{nonconic_eqn}), we can obtain the matrix inequality 
{\scriptsize
	\setlength{\arraycolsep}{-2.7pt}
	\vspace{-0.5em}
\begin{align}\label{nonconicindex}
\left[ {\begin{array}{*{20}{c}}
	{A' \tilde P(\rho) + \tilde P(\rho){A}+\dot \rho \frac{\partial{\tilde P(\rho)}}{\partial t} + {C'} Q C}&{\tilde P(\rho)B-C'  S + {C'} Q D} \\ 
	{{B'}\tilde P(\rho) - { S'}C + {D'} Q C}&{{D'}Q D - {S'}D - {D'}S + R-\alpha I}
	\end{array}} \right]{\leq}0, 
\end{align}}where $\tilde P(\rho)>0$, which can be solved with bounds (\ref{rhobound}) to determine $\alpha(\rho(t))$.
The matrix inequalities to compute the conic bounds and conicity indices are infinite dimensional due to their dependence on $\rho$ and can be solved by discretization over a mesh on $\rho$ and $\dot \rho$. Further, the conic bounds obtained are not unique; the computation of tight conic bounds for each parameter is discussed in \cite{bridgeman2014conic}.
\vspace{-0.5em}
\section{} \label{proofs}
\textit{Proof of Lemma \ref{res1}:}
	We have
	\setlength{\arraycolsep}{0pt}
	\vspace{-0.8em}
	{\small
		\begin{equation}\label{I}
		\mathop \int\limits_{t_0}^{t_n}{w(u,y)dt}=\sum\limits_{i\in i_{c}} \left( \int\limits_{{t_{i-1}}}^{t_i}w(u,y)dt  \right)+\sum \limits_{i\in i_{nc}} \left(\int\limits_{t_{i-1}}^{t_{i}} w(u,y)dt  \right).
		\end{equation}\vspace{-0.8em}}From \eqref{I}, (\ref{conic_eqn}), (\ref{nonconic_eqn}) and Assumption \ref{input}, we have
	\setlength{\arraycolsep}{0pt}
	{\small
		\begin{align}\label{bound}
		\mathop \int_{t_0}^{t_n}	w(u,y)dt &\geq  \sum\limits_{i\in i_{c}} \left(\epsilon_{i}(t_i-t_{i-1})\right)  \nonumber\\&-\sum\limits_{i\in i_{nc}}\left(\alpha_{i}\bar{u}_1^2(t_i-t_{i-1})\right).\vspace{-1.2em}
		\end{align}
	}Comparing (\ref{bound}) with \eqref{eq:av_conicity}, system (\ref{lpvsys}) is conic in the average sense if, for every $\rho(t) \in \mathcal{A}_d$, (\ref{res_1}) is satisfied.

\textit{Proof of Theorem \ref{res2}:}
	Consider the discrete parameter trajectory $\rho(i)$ in Lemma \ref{res1}. Then, the interval $\mathcal{I}=~[t_0,t_n]=\bigcup_{i=1}^{n}[t_{i-1},t_i]=\left(\bigcup_{i \in i_{c}}[t_{i-1},t_i]\right) \bigcup \left(\bigcup_{i \in i_{nc}}[t_{i-1},t_i]\right)$.
	Define a partition of $\mathcal{I}$ to be $P=\{t_0,t_1,\ldots,t_n\}$ with norm $|P|=\max{(t_i-t_{i-1})}$, $i \in\mathbb{N}_n$. Let $\mathcal{R}=~\{\rho(1),\rho(2),\ldots,\rho(n)\}$ such that for every $i \in\mathbb{N}_n$, $\rho(i) \in [t_{i-1},t_i]$. Define the tagged partition $\mathcal{F}(t,\rho)=(P,\mathcal{R})$ to be the collection of the partition $P$ and the associated parameter set $\mathcal{R}$. Define the Riemann sum $S_r$ with respect to tagged partition $\mathcal{F}(t,\rho)$ to be the RHS of \eqref{bound}.
	In the limit $|P| \to 0$, since $\rho(t)$ is piecewise continuously differentiable, $S_{r}$ can be approximated by the Riemann integral
	\vspace{-0.7em}
	\begin{equation}\label{sums}
	\lim\limits_{|P| \to 0}S_r=\int_{t \in t_c}  \epsilon(\rho(t)) dt-\bar{u}_1^2\int_{t \in t_{nc}}\alpha (\rho(t)) dt.
	\end{equation}
	From (\ref{bound}) and (\ref{sums}), \eqref{lpvsys} is conic in the average sense if \eqref{res_p} is satisfied for every $\rho(t) \in \mathcal{A}$, completing the proof.
	\vspace{-0.3em}
\bibliographystyle{IEEEtran}
\bibliography{LPVandIQCfurther}


\end{document}